\newcommand{\argmin}{\mathop{\rm arg~min}\limits}
\newcommand{\fvec}{\boldsymbol{f}}
\newcommand{\wvec}{\boldsymbol{w}}
\newcommand{\lamvec}{\boldsymbol{\lambda}}
\def\BibTeX{{\rm B\kern-.05em{\sc i\kern-.025em b}\kern-.08em
    T\kern-.1667em\lower.7ex\hbox{E}\kern-.125emX}}
\newtheorem{proposition}{Proposition}
\newtheorem{theorem}{Theorem}
\newcommand{\Continue}{\textbf{continue}}
\begin{document}

\title{
 Preference-Optimal Multi-Metric Weighting for Parallel Coordinate Plots
}

\author{
    \IEEEauthorblockN{
        Chisa Mori\textsuperscript{*}~\textsuperscript{1}, Shuhei Watanabe\textsuperscript{*}~\textsuperscript{2,3}, Masaki Onishi~\textsuperscript{3}, Takayuki Itoh~\textsuperscript{1}
    } \vspace{1.5mm}
    \IEEEauthorblockA{
        \textsuperscript{*} Equal Contribution
    } \vspace{2mm}
    \IEEEauthorblockA{
        \textsuperscript{1}\textit{Ochanomizu University}, Tokyo, Japan \\
        \texttt{\{mori.chisa, itot\}@is.ocha.ac.jp} \\
    } \vspace{1.5mm}
    \IEEEauthorblockA{
        \textsuperscript{2}\textit{Preferred Networks Inc.}, Tokyo, Japan \\
        \texttt{shuheiwatanabe@preferred.jp}
    } \vspace{1.5mm}
    \IEEEauthorblockA{
        \textsuperscript{3}\textit{Advanced Industrial Science and Technology (AIST)}, Tokyo, Japan \\
        \texttt{onishi-masaki@aist.go.jp}
    }
}

\maketitle

\begin{abstract}
    Parallel coordinate plots (PCPs) are a prevalent method to interpret the relationship between the control parameters and metrics.
    PCPs deliver such an interpretation by color gradation based on a single metric.
    However, it is challenging to provide such a gradation when multiple metrics are present.
    Although a na\"ive approach involves calculating a single metric by linearly weighting each metric, such weighting is unclear for users.
    To address this problem, we first propose a principled formulation for calculating the optimal weight based on a specific preferred metric combination.
    Although users can simply select their preference from a two-dimensional (2D) plane for bi-metric problems, multi-metric problems require intuitive visualization to allow them to select their preference.
    We achieved this using various radar charts to visualize the metric trade-offs on the 2D plane reduced by UMAP.
    In the analysis using pedestrian flow guidance planning, our method identified unique patterns of control parameter importance for each user preference, highlighting the effectiveness of our method.
\end{abstract}

\begin{IEEEkeywords}
    Visualization, High-Dimensional Data Visualization, Parallel Coordinate Plots, Real-World Application
\end{IEEEkeywords}

\section{Introduction}

In the real world, decision makers often need to explain the relationship between the control parameters and their effects on the metrics of interest.
For example, human flow simulation, which attempts to mimic human flow in a real environment, has various control parameters that must be adapted by decision makers such that the real human flow becomes idealistic~\cite{yamashita2013crowdwalk}.
Decision making is particularly challenging when many control parameters exist.

Parallel coordinate plots (PCPs)~\cite{heinrich2013state} are among the most prevalent methods for interpreting the relationship between potentially high-dimensional control parameters and metrics of interest.
PCPs draw polylines where each polyline represents the control parameter values and their corresponding metric values.
In particular, each polyline is colored based on a metric value such that decision makers can distinguish the pattern in the performant control parameters.
However, practitioners often encounter multi-metric problems, which have multiple metrics to consider, and the coloring of PCPs is unclear owing to the lack of a unique metric, shown in Figure~\ref{fig:intro:tangential-2d-conceptual}.

\begin{figure}[tb]  
\begin{center}
\includegraphics[width=0.48\textwidth]{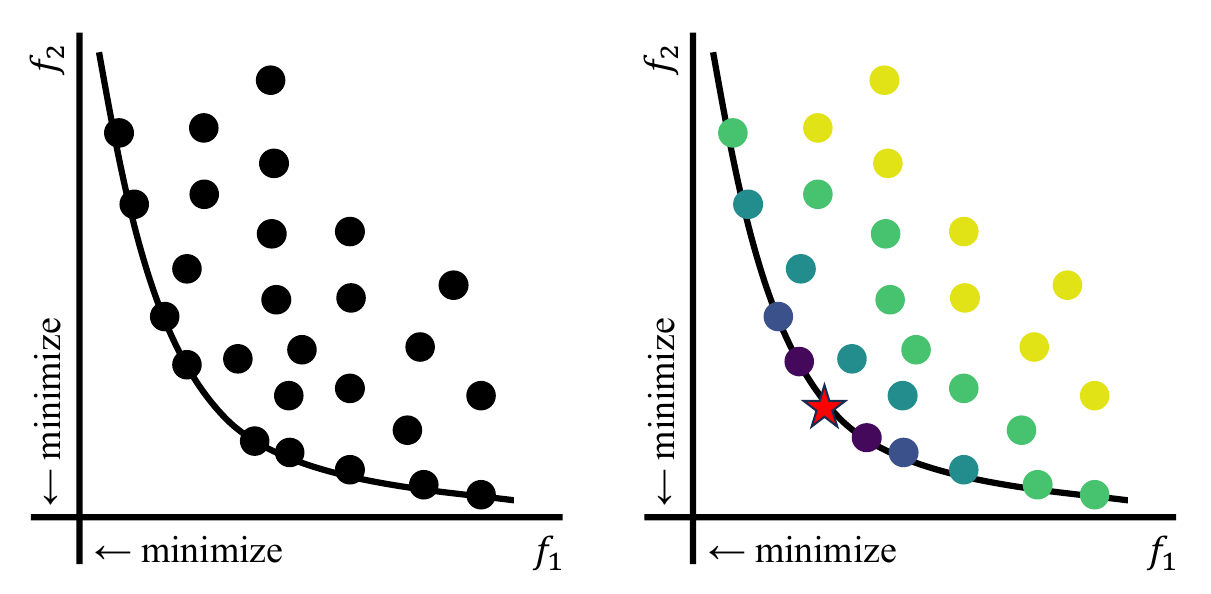}
\caption{
    Conceptual visualization of the preference-optimal weights in a bi-metric problem.
    For problems with three or more metrics, we provide additional visual support using radar charts as described in Section~\ref{main:section:method}.
    \textbf{Left}: A set of bi-metrics in the 2D plane.
    Black dots represent the observed bi-metrics, whereas the black line represents the approximated Pareto front.
    In principle, all the observations on the approximated Pareto front are equally performant considering multi-objective optimization.
    \textbf{Right}: Users provide their preference by selecting a point, e.g., the red star, on the approximated Pareto front.
    Our proposition, cf. Section~\ref{main:section:method}, enables calculating the optimal weights given the choice.
    For example, if the user selects the red star, each observation can be colored as in the right figure, where blue is better and yellow is worse under the user's choice.
    By doing so, we can rank the Pareto solutions based on the user preference.
}
\label{fig:intro:tangential-2d-conceptual}
\end{center}
\end{figure}

To this end, we propose a mathematical formulation to color PCPs for multi-metric problems based on a user preference.
Our formulation calculates the optimal weights to obtain the weighted average of each metric, allowing us to color PCPs based on a single weighted metric.
Although this formulation provides a principled way to calculate a single weighted metric in general problems, selecting an ideal trade-off between each metric is not trivial for users, particularly when more than two metrics are present.
To address this problem, we provide visual support combined with dimension reduction and radar charts.

Through experiments, we demonstrated the effectiveness of the proposed algorithm using pedestrian flow simulation data.
Our analysis identified the important control parameters for improving various trade-offs.
Importantly, different trade-offs exhibit different patterns of important control parameters, showing the effectiveness of our method.

In summary, our contributions are as follows:
\begin{enumerate}
    \item a novel formulation to calculate the optimal weights for a single weighted metric given user feedback,
    \item intuitive visual support to allow users to select their preferences even for multi-metric problems using radar charts and UMAP, and
    \item the demonstration using a pedestrian flow simulator that shows the effectiveness of our proposed method.
\end{enumerate}

\begin{figure}[tb]
    \begin{center}
        \includegraphics[width=0.48\textwidth]{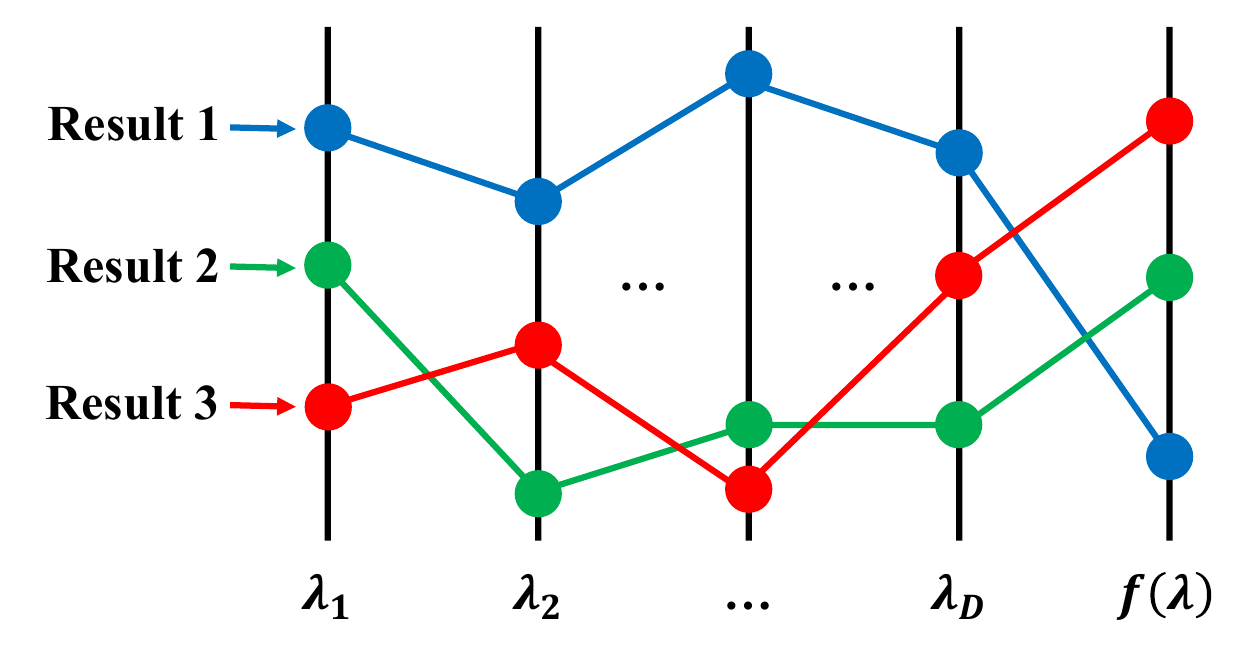}
        \caption{
            Example of PCP.
            Each polyline shows a control parameter vector used in an evaluation and its metric values.
            Each vertical axis represents either a control parameter or a metric. 
        }
        \label{fig:main:related-work:pcp}
    \end{center}
\end{figure}

\section{Background}

\subsection{Multi-Metric Problems}

A multi-metric problem is controlled by the parameter vector $\lamvec \coloneqq [\lambda_1, \cdots, \lambda_D]\in \Lambda \coloneqq \Lambda_1 \times \cdots \times \Lambda_D \subseteq \mathbb{R}^{D}$, where $\Lambda_d \subseteq \mathbb{R}$ denotes the domain of the $d$-th control parameter for $d \in [D] \coloneqq \{1, 2, \cdots, D\}$.
Examples of the control parameters include the ratio of pedestrians to be allocated to a certain route and the start time of pedestrian movement in a pedestrian flow simulation.
Each control parameter $\lamvec$ is evaluated using the metrics $\fvec(\lamvec) \coloneqq [f_1, \cdots, f_M]\in \mathbb{R}^M$ to \emph{minimize}, where $M$ denotes the number of metrics.
For example, we may evaluate a simulation with the total duration of an evacuation and how crowded each route.
This study aims to visualize some patterns of promising control parameters based on evaluated results, i.e., a set of control parameters and the corresponding metrics, $\{(\lamvec^{(n)}, \fvec^{(n)})\}_{n=1}^N$ where $N$ denotes the number of evaluated control parameter vectors and $\fvec^{(n)} \coloneqq \fvec(\lamvec^{(n)})$.

\subsection{Parallel Coordinate Plots (PCPs)}

Parallel coordinate plots (PCPs)~\cite{heinrich2013state} visualize each control parameter vector and the corresponding metric values as a polyline.
Figure~\ref{fig:main:related-work:pcp} illustrates PCPs.
Each polyline passes through each vertical axis, representing what value each control parameter and each metric had.
Since performant results are often of interest, we color each polyline differently depending on how performant each polyline is.
Although such colorization is straightforward for a single-metric problem, as PCPs can be simply colored by the metric, it is not trivial for a multi-metric problem.
This motivated our study.

\subsection{Pareto Solutions and Pareto Front}

In practice, Pareto optimality defines optimality in multi-metric problems.
Given a set of multiple metrics $\mathcal{F} \coloneqq \{\fvec^{(n)}\}_{n=1}^N$, the Pareto solutions of the metric set is $\mathcal{P} \coloneqq \{\fvec \in \mathcal{F} \mid \forall \fvec^\prime \in \mathcal{F}, \fvec^\prime \nprec \fvec\}$ where $\fvec^\prime \prec \fvec$ represents the dominance of $\fvec^{\prime}$ over $\fvec$, implying $f_m^\prime \leq f_m$ for all $m \in [M]$ and $f_m^\prime < f_m$ for some $m \in [M]$.
The Pareto front of $\mathcal{F}$ is the hypercurve where the Pareto solutions are supposed to be distributed.
Watanabe~\cite{watanabe2023python} provides further details.

\section{Parallel Coordinate Plots for Multi-Metric Problems}
\label{main:section:method}

In this section, we first discuss PCPs for bi-metric problems and generalize them to higher-dimensional problems.
For both setups, we considered coloring PCPs based on user preference feedback and calculated a single weighted metric, i.e., a linear combination of each metric value, using the optimal weights computed from the user feedback.
Although user feedback can be easily provided by clicking on a preferred solution on the 2D plane, it is not trivial to achieve this for multi-metric problems, leading to our novel proposition in the combination of UMAP and radar charts.

\subsection{Preference Optimal Weight Calculation for Bi-Metric Problems}
In principle, PCPs require a single metric to color each polyline.
Therefore, we propose a method for calculating the weight of each metric.
Specifically, we color each polyline based on a single weighted metric $\phi_{\wvec}(\lamvec) \coloneqq \sum_{m=1}^M w_m f_m(\lambda)$ where $\wvec \coloneqq \{w_m\}_{m=1}^M$ denotes the weight vector subject to $\sum_{m=1}^M w_m = 1$.
We assume that the Pareto front $\mathcal{F}$ is represented as a strict convex function $g: [f_1^{\min}, f_1^{\max}] \rightarrow \mathbb{R}$ in the 2D plane where $f_1^{\min}, f_1^{\max}$ are the minimum and maximum of the first metric, and user feedback is provided in the form of $(f_1^u, g(f_1^u))$.
Thus, the following holds:
\begin{proposition}
    Assume that the optimal weights are defined such that a single metric $\phi_{\wvec^\star}(\lamvec)$ subject to $f_2 \geq g(f_1)$ achieves optimality at the user-specified point $(f_1^u, g(f_1^u))$, then the optimal weights are calculated as follows:
    \begin{equation}
        \wvec^\star = \biggl(-\frac{g^\prime(f_1^u)}{1 - g^\prime(f_1^u)}, \frac{1}{1 - g^\prime(f_1^u)}\biggr).
    \end{equation}
    \label{proposition:2d-optimal-weights}
\end{proposition}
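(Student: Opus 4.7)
The plan is to recognize this as a first-order condition problem: we have a linear objective $\phi_{\wvec}(\lamvec) = w_1 f_1 + w_2 f_2$ being minimized over the feasible region $\{(f_1, f_2) \in \mathbb{R}^2 : f_2 \geq g(f_1)\}$ defined by the convex Pareto front, subject to the normalization $w_1 + w_2 = 1$. Since optimality is assumed at the point $(f_1^u, g(f_1^u))$ which lies exactly on the boundary $f_2 = g(f_1)$, the formula for $\wvec^\star$ should drop out of the stationarity condition once the problem is reduced to one variable.

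First I would argue that because the objective is linear in $(f_1,f_2)$ and $g$ is strictly convex, the minimum over $\{f_2 \geq g(f_1)\}$ must lie on the boundary $f_2 = g(f_1)$; otherwise, for any interior point one could decrease $f_2$ (assuming $w_2 > 0$) and strictly improve the objective. This reduces the problem to minimizing the scalar function $h(f_1) \coloneqq w_1 f_1 + w_2 g(f_1)$ over $[f_1^{\min}, f_1^{\max}]$. Strict convexity of $g$ implies $h''(f_1) = w_2 g''(f_1) > 0$ provided $w_2 > 0$, so the stationary point of $h$ is the unique global minimum.

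Next, I would impose the first-order condition $h'(f_1^u) = 0$, which gives
\begin{equation}
    w_1 + w_2 g'(f_1^u) = 0.
\end{equation}
Coupling this with the constraint $w_1 + w_2 = 1$, I obtain a two-by-two linear system whose unique solution is $w_2^\star = 1/(1 - g'(f_1^u))$ and $w_1^\star = -g'(f_1^u)/(1 - g'(f_1^u))$, matching the claimed expression.

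The main obstacle will be a clean justification of the boundary reduction and of the implicit sign assumption $w_2^\star > 0$ needed for the second-order condition. I would handle the former by the linearity/feasibility shrinking argument above, and the latter by observing that on a Pareto front trading off two metrics to be minimized we have $g'(f_1^u) \leq 0$, so $1 - g'(f_1^u) \geq 1 > 0$, which simultaneously guarantees $w_2^\star > 0$ (confirming a minimum rather than a maximum) and $w_1^\star \geq 0$ (so the weighted average is well-defined as a convex combination).
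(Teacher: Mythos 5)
Your proof is correct and takes essentially the same route as the paper: both arguments come down to the tangency/first-order condition $w_1 + w_2 g^\prime(f_1^u) = 0$ at the user-specified boundary point (the paper phrases it as the level line $w_1 f_1 + w_2 f_2 = c$ being tangent to the strictly convex front, you phrase it as stationarity of the reduced scalar problem $h(f_1) = w_1 f_1 + w_2 g(f_1)$), combined with the normalization $w_1 + w_2 = 1$. Your explicit check that $g^\prime(f_1^u) \leq 0$ on a Pareto front forces $w_2^\star > 0$, so the stationary point is genuinely the minimum, is slightly more careful than the paper's corresponding remark, but the substance is identical.
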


\begin{figure}[tb]
    \begin{center}
    \includegraphics[width=0.48\textwidth]{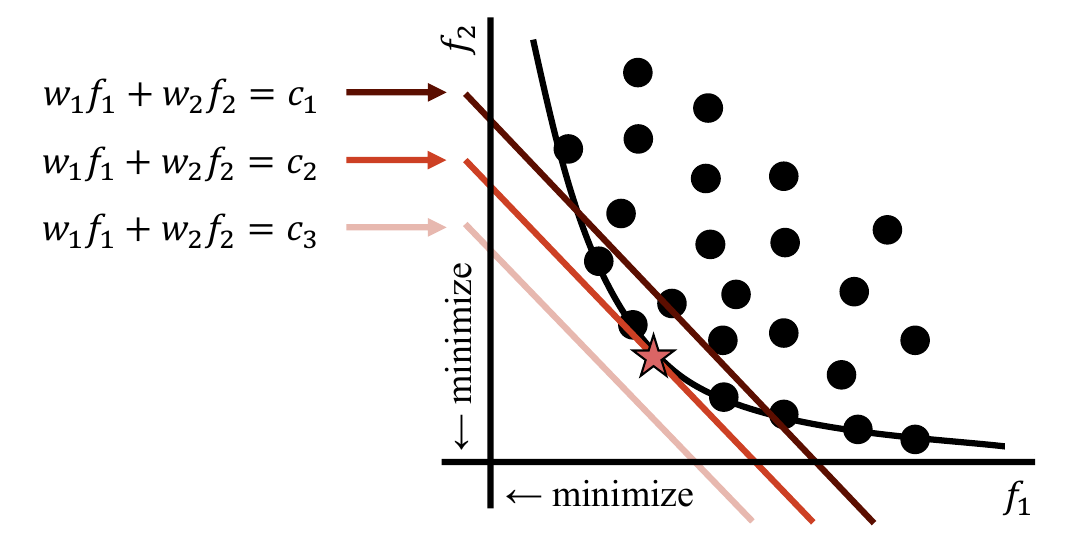}
    \caption{
        Conceptual visualization of Proposition~\ref{proposition:2d-optimal-weights}.
        The black dots represent each evaluated result, whereas the black line represents the approximated Pareto front $f_2 = g(f_1)$.
        The three different red lines represent the level sets, each with a different single weighted metric value.
        As shown by the red star, if the approximated Pareto front is strictly convex, the minimum possible $c$ is achieved when the level set is tangent to the approximated Pareto front.
        In our example, this is when we take $c_2$.
    }
    \label{fig:main:method:proof-2d-conceptual}
    \end{center}
\end{figure}

\begin{proof}
    Let $L_{\wvec}(c)$ be the level set $\{(f_1, f_2) | w_1 f_1 + w_2 f_2 = c\}$ and assume that $g^\prime$ is the derivative of $g$.
    Since $w_1 f_1 + w_2 f_2 = c$ can be transformed into $f_2 = h(f_1) = -\frac{w_1}{w_2}f_1 + \frac{c}{w_2}$ and $g(f_1) - h(f_1)$ is also a strict convex function, the minimum $c^\star$, which achieves $L_{\wvec}(c) \neq \emptyset$, is obtained when $g^\prime(f_1) - h^\prime(f_1) = 0$, i.e., $g^\prime(f_1) = -\frac{w_1}{w_2}$, and $\min_{f_1} g(f_1) - h(f_1) = 0$ hold.
    Owing to strict convexity, $g(f_1) - h(f_1)$ becomes minimum at the point where $g^\prime(f_1) - h^\prime(f_1) = 0$ holds.
    As the minimum must be at $(f_1^u, g(f_1^u))$, $g^\prime(f_1^u) = -\frac{w_1^\star}{w_2^\star}$ must hold.
    By solving $g^\prime(f_1^u) = -\frac{w_1^\star}{w_2^\star}$ and $w_1 + w_2 = 1$, we obtain $\wvec^\star = (-\frac{g^\prime(f_1^u)}{1 - g^\prime(f_1^u)}, \frac{1}{1 - g^\prime(f_1^u)})$.
    Note that Pareto optimality in 2D guarantees that the Pareto front $g(f_1)$ decreases monotonically, resulting in $g^\prime(f_1) \leq 0 \Rightarrow g^\prime(f_1) - 1 \neq 0$.
    Since the optimality at the user-specified point, i.e., $L_{\wvec^\star}(w_1^\star f^u_1 + w_2^\star f^u_2) = \{(f_1^u, f_2^u)\}$, holds, this completes the proof.
\end{proof}
An intuitive proof of this is provided in Figure~\ref{fig:main:method:proof-2d-conceptual}.
The figure shows three level sets $L_{\wvec}(c_1)$, which has two intersections with $f_2 \geq g(f_1)$, $L_{\wvec}(c_2)$, which is tangent to $f_2 \geq g(f_1)$, and $L_{\wvec}(c_3)$, and does not have any intersection with $f_2 \geq g(f_1)$ where $c_1 < c_2 < c_3$ as can be derived trivially from the hierarchical relationship of the intercepts.
The proof states that owing to strict convexity of $g$, there exists a level set that is tangent to $f_2 \geq g(f_1)$ and contains a unique point.

In this study, we use $g(f_1) = f_2 = \frac{b}{f_1 - a} + c$, where $a > f_1^{\min}$ and $b > 0$, for the approximation and fit the parameters $a, b, c$ by minimizing the squared error.

\subsection{Preference Optimal Weight Calculation for Multi-Metric Problems}

Given a user-specified point, the optimal weights can be calculated in a manner similar to that in Proposition~\ref{proposition:2d-optimal-weights}.
Let $g(\fvec) = b$ be the approximated Pareto front defined on $[f_1^{\min}, f_1^{\max}] \times \dots \times [f_M^{\min}, f_M^{\max}]$ and $\fvec^u \in g^{-1}(b)$ be the user-specified point.
Thus, the following holds:
\begin{theorem}
    The optimal weights are calculated as follows:
    \begin{equation}
        \wvec^\star \propto \nabla g(\fvec^u) \coloneqq \biggl(\frac{\partial g(\fvec^u)}{\partial f_1}, \dots, \frac{\partial g(\fvec^u)}{\partial f_M}\biggr).
        \label{main:method:optimal-weights-in-general}
    \end{equation}
\end{theorem}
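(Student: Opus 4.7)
The plan is to generalize the tangent-hyperplane argument of Proposition~\ref{proposition:2d-optimal-weights} from two to $M$ dimensions. First, I would set up the analogous constrained optimization: the weighted metric $\phi_{\wvec}(\lamvec) = \wvec \cdot \fvec$ is to be minimized over the Pareto-attainable region, and its level sets are hyperplanes $H_{\wvec}(c) \coloneqq \{\fvec \in \mathbb{R}^M : \wvec \cdot \fvec = c\}$. Decreasing $c$ from a value for which $H_{\wvec}(c)$ lies outside the attainable region, I would identify the smallest $c^\star$ at which $H_{\wvec}(c^\star)$ just touches the Pareto front $\{\fvec : g(\fvec) = b\}$. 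By the multivariate analogue of the strict-convexity argument used in the 2D proof, this contact occurs at a single point, which by hypothesis is $\fvec^u$.

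Second, I would convert the geometric tangency condition into the claimed algebraic one. The normal to the hyperplane $H_{\wvec^\star}(c^\star)$ at any of its points is $\wvec^\star$ itself, while the normal to the level hypersurface $\{\fvec : g(\fvec) = b\}$ at $\fvec^u$ is $\nabla g(\fvec^u)$. Since the hyperplane is tangent to this hypersurface at $\fvec^u$, the two normals must be parallel, which is precisely the assertion $\wvec^\star \propto \nabla g(\fvec^u)$. The same conclusion follows immediately from the KKT stationarity condition for the Lagrangian $\mathcal{L}(\fvec, \mu) = \wvec \cdot \fvec - \mu(g(\fvec) - b)$: the stationary point $\fvec^u$ satisfies $\wvec^\star = \mu \nabla g(\fvec^u)$, and the normalization $\sum_{m=1}^M w_m^\star = 1$ then pins down the proportionality constant.

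The main obstacle is the first step, namely establishing that the tangent point is unique and coincides with $\fvec^u$ rather than merely being a critical point. In the 2D proof this followed from strict convexity of the univariate function $g$; here I would need the multivariate analogue, namely strict convexity of the attainable sublevel region bounded by $g(\fvec) = b$ near $\fvec^u$. Under that assumption, the argument of Proposition~\ref{proposition:2d-optimal-weights} lifts verbatim, with the scalar derivative condition $g^\prime(f_1^u) = -w_1^\star / w_2^\star$ replaced by the vector condition $\nabla g(\fvec^u) \parallel \wvec^\star$. Finally, I would fix the sign of the proportionality constant by requiring $\wvec^\star$ to point into the infeasible side of the Pareto front, so that $\fvec^u$ is a minimizer rather than a maximizer of the weighted metric, mirroring the monotonicity remark at the end of the 2D proof.
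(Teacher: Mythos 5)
Your proposal is correct and follows essentially the same route as the paper: generalize the tangency argument of Proposition~\ref{proposition:2d-optimal-weights} by noting that the level sets of $\phi_{\wvec}$ are hyperplanes, that the tangent hyperplane to $g^{-1}(b)$ at $\fvec^u$ has normal $\nabla g(\fvec^u)$, and that strict convexity forces this hyperplane to meet the Pareto front only at $\fvec^u$, so $\wvec^\star \propto \nabla g(\fvec^u)$ with the sum-to-one normalization. Your added remarks on the KKT stationarity view and on pinning down the sign are sensible elaborations of points the paper's terse proof leaves implicit, but they do not change the underlying argument.
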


\begin{proof}
    Similar to Proposition~\ref{proposition:2d-optimal-weights}, if $\fvec^u$ is on the approximated Pareto front, which is guaranteed from $\fvec^u \in g^{-1}(b)$, the tangential hyperplane at $\fvec^u$, i.e., $H \coloneqq \{\fvec | \nabla g(\fvec^u)^\top \cdot (\fvec - \fvec^u) = \boldsymbol{0} \}$, satisfies $H \cap \{\fvec \in g^{-1}(b)\} = \{\fvec^u\}$ owing to strict convexity.
    This completes the proof.
    Notably, the optimal weights are calculated by normalizing each element of $\nabla g(\fvec^u)$ with its summation.
\end{proof}

In our experiment, we use $g(\fvec) \coloneqq \prod_{m=1}^M (f_m - a_m) = b$, which is the general version of $g(f_1) = f_2 = b / (f_1 - a) + c$, leading to the following weights:
\begin{equation}
    \wvec^\star \propto \biggl(
        \frac{b}{f_1^u - a_1}, \dots, \frac{b}{f_M^u - a_M}
    \biggr),
\end{equation}
where we used the gradients of the approximated Pareto front at $\fvec^u$ calculated as follows:
\begin{equation}
    \frac{\partial g(\fvec^u)}{\partial f_i} = \prod_{m\neq i}^M (f_m^u - a_m) = \frac{b}{f_i^u - a_i}.
\end{equation}
Thus, we can color PCPs based on a single weighted metric, as in bi-metric problems.
Recall that although we chose $g(\fvec) = \prod_{m=1}^{M} (f_m - a_m) = b$ as the approximated Pareto front, the choice of the function can be arbitrary as long as the function is strictly convex and differentiable in the domain of $\fvec$.
However, it is difficult for users to select $\fvec^u$, unlike in bi-metric problems.
To this end, we propose visual support that enables users to intuitively select a trade-off of interest.

\begin{figure}[tb]
    \begin{center}
    \includegraphics[width=0.48\textwidth]{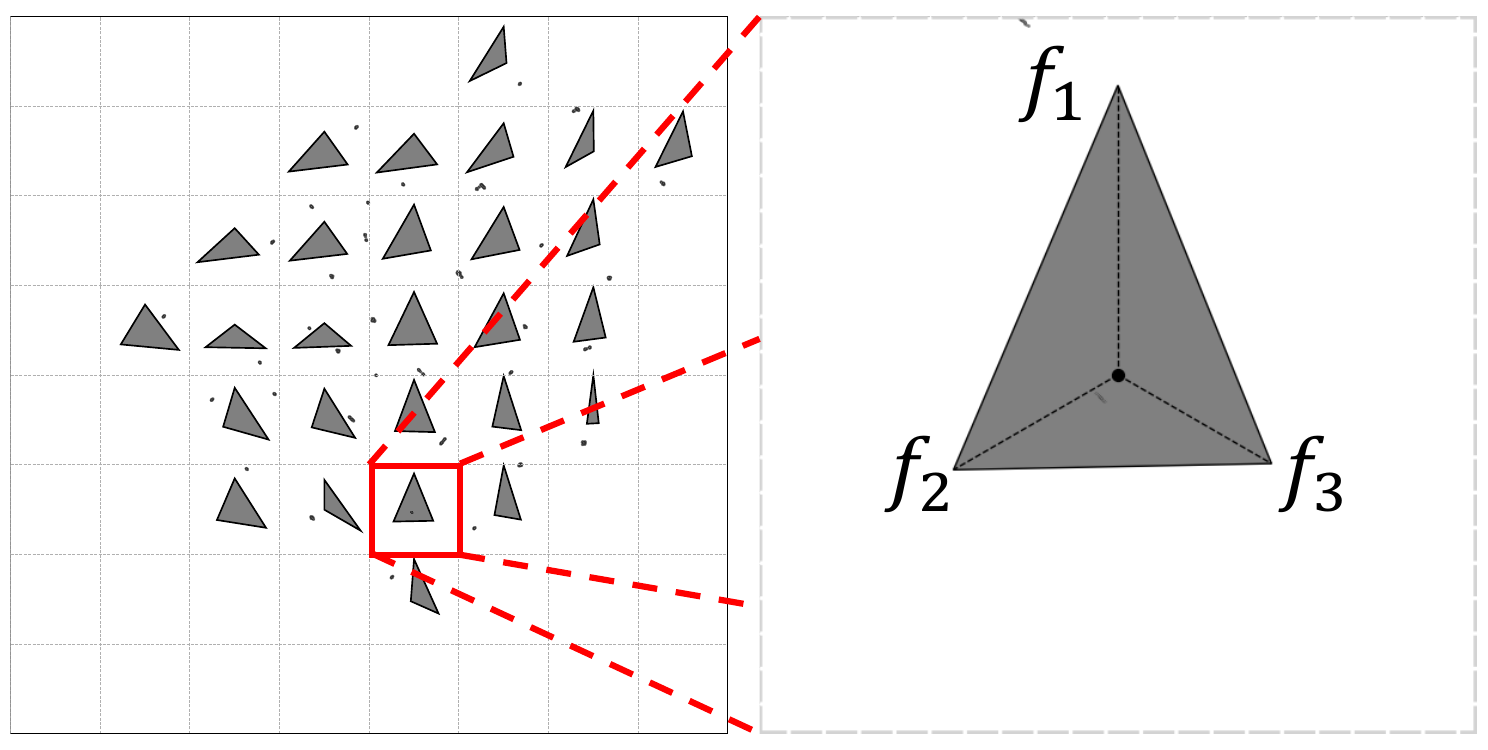}
    \caption{
        Visual support to select a multi-metric trade-off.
        \textbf{Left}: Pareto solutions in the space reduced by UMAP.
        Pareto solutions in each lattice are summarized by a radar chart.
        \textbf{Right}: Radar chart that shows the mean Pareto solutions in each lattice.
        Each radar chart shows how good each metric is.
        The larger a radar chart is, the better the Pareto solutions in the lattice are.
    }
    \label{fig:main:method:radarchart}
    \end{center}
\end{figure}

\begin{figure*}[tb]
    \begin{center}
    \includegraphics[width=0.75\textwidth]{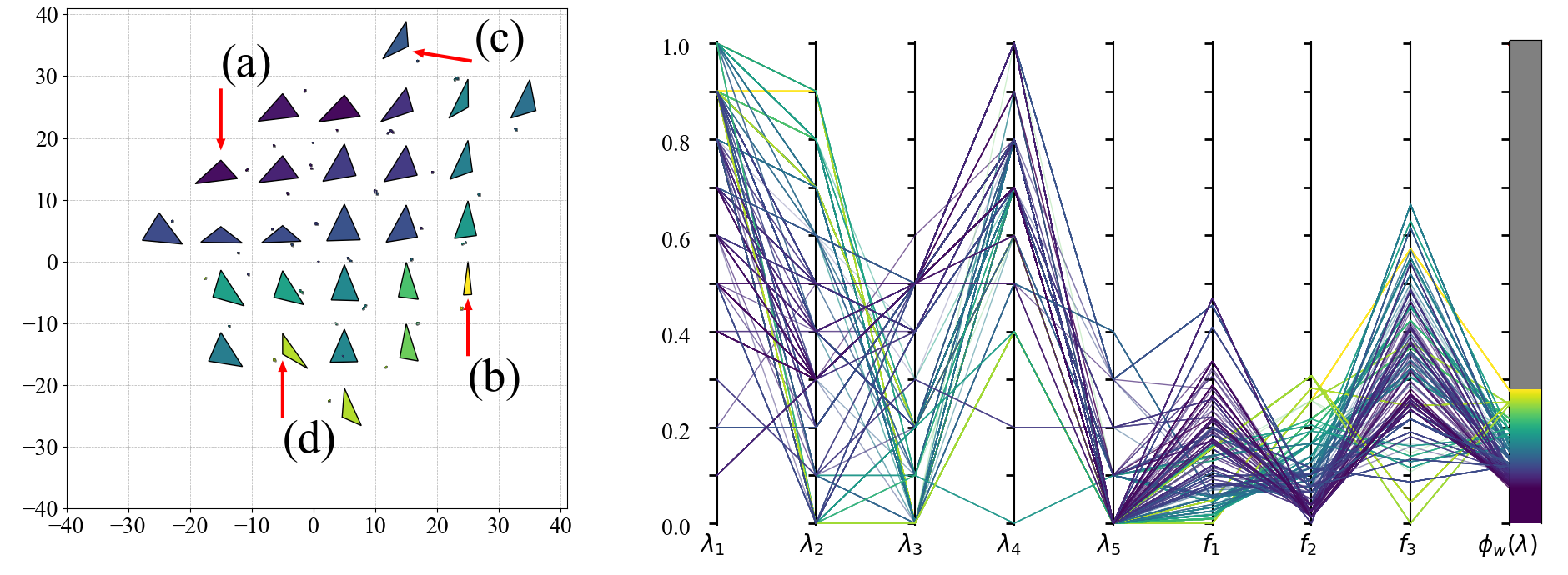}
    \caption{
        Illustration of the proposed method.
        \textbf{Left}: Radar charts of mean Pareto solutions on the 2D plane reduced by UMAP.
        Each radar chart shows the mean Pareto solution in the corresponding lattice, which helps users select their preference intuitively.
        UMAP enables radar charts with similar patterns to be located close to each other.
        The optimal weights are calculated based on the selected radar chart.
        \textbf{Right}: PCPs colored based on Radar Chart (a).
        Blue represents better single weighted metric values, whereas yellow represents worse single weighted metric values.
    }
    \label{fig:main:method:proposed-method}
    \end{center}
\end{figure*}

\subsection{Visual Support to Select Multi-Metric Trade-off}
An overview of our proposition is shown in Figure~\ref{fig:main:method:radarchart}.
To provide visual support, we first reduce the Pareto solutions $\mathcal{P}$ to a 2D plane, which is divided into lattices at even intervals, using UMAP~\cite{mcinnes2018umap}~\footnote{
\url{https://github.com/lmcinnes/umap}
}.
In particular, radar charts with similar patterns are located close to each other, making it easier for users to select a radar chart.
We then calculate the mean Pareto solution for each lattice.
Finally, we plot a radar chart for each lattice using the following:
\begin{equation}
        \bar{f}_m^\prime \coloneqq 1 - \frac{\bar{f}_m - f_m^{\min}}{f_m^{\max} - f_m^{\min}}.
    \label{main:method:normalization-of-metrics}
\end{equation}
Note that $\bar{f}_m$ is the $m$-th metric value of the mean Pareto solutions in a lattice, and $f_m^{\min}, f_m^{\max}$ are the minimum and maximum $m$-th metric values of the Pareto solutions, respectively.
The optimal weights are calculated based on the user's choice of a radar chart from the 2D plane, as shown in Figure~\ref{fig:main:method:radarchart}.
The pseudocode for the proposed visual support is described in Algorithm~\ref{main:method:visualization-support}.

After a radar chart is specified, we first calculate the nearest Pareto solution $\fvec^u$ on the approximated Pareto front $g(\fvec) \coloneqq \prod_{m=1}^M (f_m - a_m) = b$ using the sequential least squares programming in \texttt{scipy.minimize}~\footnote{
    \url{https://docs.scipy.org/doc/scipy/reference/optimize.minimize-slsqp.html}
}.
We assume $\fvec^r$ is the mean Pareto solution specified by the radar chart.
We then find the nearest Pareto solution by solving the following:
\begin{equation}
    \fvec^u \in \argmin_{\fvec \in g^{-1}(b)} \| \fvec^r - \fvec \|_{2}.
    \label{main:method:find-nearest-point-on-curve}
\end{equation}
The formulation above is equivalent to $\argmin_{\fvec \in \mathbb{R}^M} \| \fvec^r - \fvec \|_{2}$ subject to $g(\fvec) = b$.
Once the nearest Pareto solution is identified, the optimal weights can be computed easily using Eq.~(\ref{main:method:optimal-weights-in-general}).
Using the optimal weights, we color PCPs as in Figure~\ref{fig:main:method:proposed-method}.

\begin{algorithm}[tb]
  \caption{PCPs for Multi-Metric Problems}
  \label{main:method:visualization-support}
  \begin{algorithmic}[1]
    \Statex{\textbf{Input}: evaluated results $\mathcal{D} \coloneqq \{(\lamvec^{(n)}, \fvec^{(n)})\}_{n=1}^N$}
    \Statex{}
    \State{(1) \textbf{Initialization}}
    \State{Extract the Pareto solutions $\mathcal{P}$ from $\mathcal{D}$}
    \State{Approximate the Pareto front by fitting the parameters $\{a_m\}_{m=1}^M, b$ in $g(\fvec) = \prod_{m=1}^M (f_m - a_m) = b$}
    \State{Apply UMAP to $\mathcal{P}$}
    \For{each lattice}
    \If{No solutions in the lattice}
    \State{\Continue}
    \EndIf
    \State{Collect the solutions in the lattice.}
    \State{Calculate the mean Pareto solution $\bar{\fvec}$}
    \State{Transform $\bar{\fvec}$ into $\bar{\fvec}^\prime$ using Eq.~(\ref{main:method:normalization-of-metrics})}
    \State{Plot a radar chart in the lattice}
    \EndFor
    \Statex{}
    \State{(2) \textbf{PCPs with the optimal weights}}
    \State{Select a radar chart}
    \State{Solve Eq.~(\ref{main:method:find-nearest-point-on-curve}) to find the nearest Pareto solution $\fvec^u$}
    \State{Calculate the optimal weights $\wvec^\star$ based on Eq.~(\ref{main:method:optimal-weights-in-general})}
    \State{Color PCPs based on the optimal weights $\wvec^\star$}
  \end{algorithmic}
\end{algorithm}

\begin{figure*}[tb]
  \begin{center}
    \subfloat[PCPs using the optimal weights by selecting Radar Chart (b) in Figure~\ref{fig:main:method:proposed-method}\label{subfig:pcp-b}]{
      \includegraphics[width=0.30\textwidth]{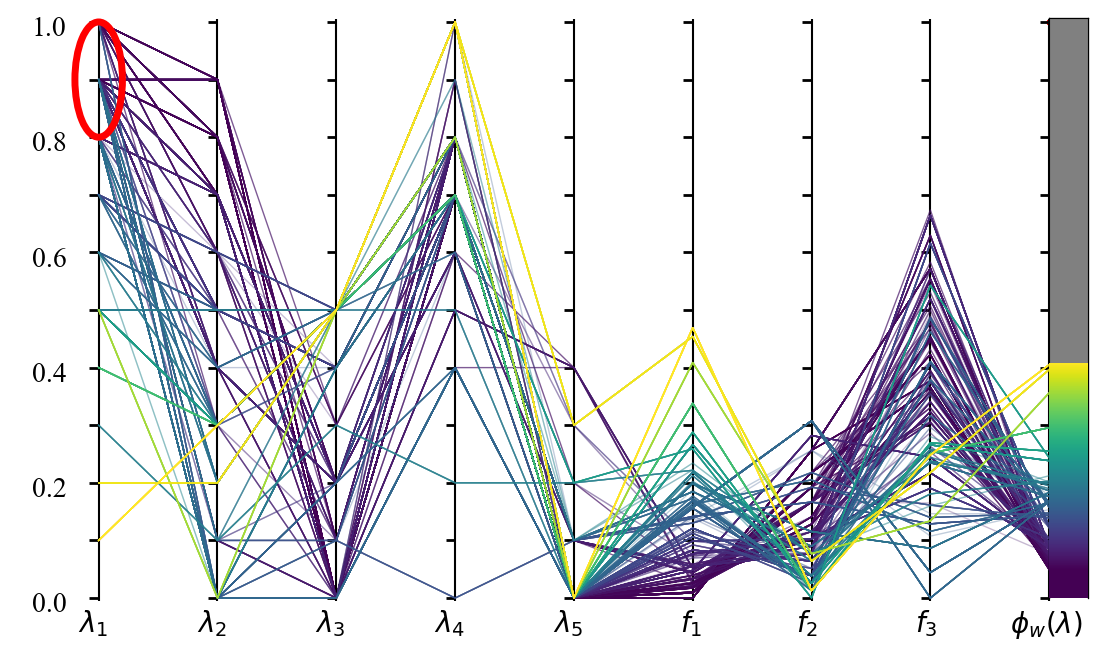}
    }
    \hspace{0.02\textwidth}
    \subfloat[PCPs using the optimal weights by selecting Radar Chart (c) in Figure~\ref{fig:main:method:proposed-method}\label{subfig:pcp-c}]{
      \includegraphics[width=0.30\textwidth]{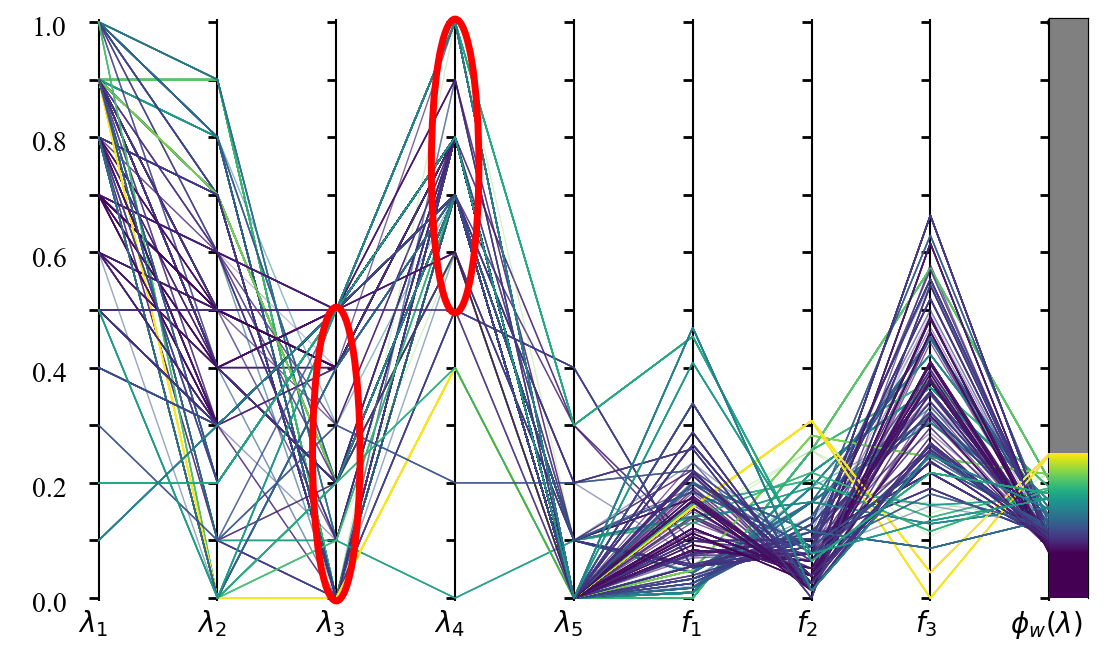}
    }
    \hspace{0.02\textwidth}
    \subfloat[PCPs using the optimal weights by selecting Radar Chart (d) in Figure~\ref{fig:main:method:proposed-method}\label{subfig:pcp-d}]{
      \includegraphics[width=0.30\textwidth]{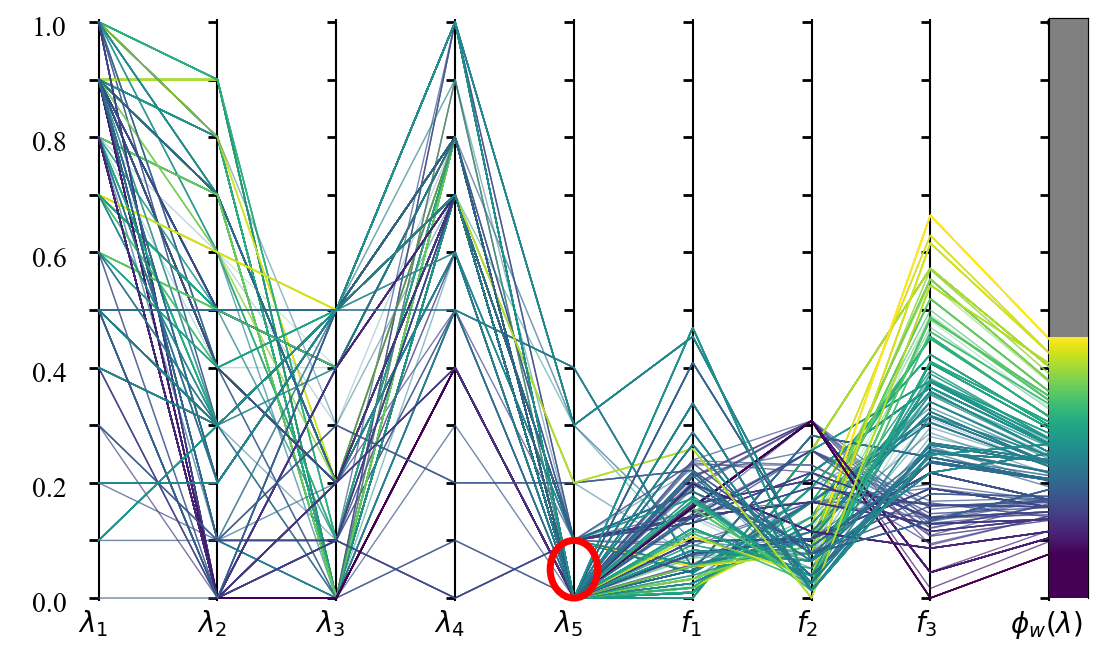}
    }
    \caption{
        PCPs based on three different radar charts.
        The PCPs are colored so that blue represents better single weighted metric values and yellow represents worse single weighted metric values.
        \textbf{Left}: The weights that put emphasis on $f_1$.
        A larger $\lambda_1$ improves $f_1$ as can be seen in the red circle. 
        \textbf{Center}: The weights that put emphasis on $f_2$.
        A smaller $\lambda_3$ or a larger $\lambda_4$ improves $f_2$ as can be seen in the red circles.
        \textbf{Right}: The weights that put emphasis on $f_3$.
        A smaller $\lambda_5$ improves $f_3$ as can be seen in the red circle.
    \label{fig:main:experiments:pcps-with-different-weights}
    }
  \end{center}
\end{figure*}

\section{Real-World Application to Pedestrian Flow Simulation}

\subsection{Problem Setup}
In this section, we provide an application example using CrowdWalk~\cite{yamashita2013crowdwalk}~\footnote{
\url{https://github.com/crest-cassia/CrowdWalk}
}, a pedestrian flow simulator in evacuation guidance.
The pedestrian flow in the simulator is controlled by five control parameters: 
$\lambda_1$, the time interval of the start of evacuation, $\lambda_2$ and $\lambda_3$, the percentages of guidance from the east side to the west side at two different locations, and $\lambda_4$ and $\lambda_5$, the percentages of guidance to the underground at two different locations, respectively.
Each control parameter was normalized to $[0, 1]$ and we discretized the domain of each control parameter into $\{k/10\}_{k=0}^{10}$, leading to $11^5$ control parameter combinations overall.
We consider three metrics in this problem:
$f_1$, the degree of congestion, $f_2$, the overall evacuation completion time, and $f_3$, the total traveling distance of pedestrians, where all the metrics were normalized to $[0, 1]$ for convenience.
Note that PCPs visualize the Pareto solutions and the top-30 observations with respect to the single weighted metric $\phi_{\wvec}(\lamvec)$, and we sampled 1000 control parameter vectors using \texttt{TPESampler}~\footnote{
    \url{https://optuna.readthedocs.io/en/stable/reference/samplers/generated/optuna.samplers.TPESampler.html}
} in Optuna~\cite{akiba2019optuna} and performed the simulator with these control parameters to collect the results.
We defer the details of the TPE algorithm to Watanabe~\cite{watanabe2023tree} and those of the multi-objective version to Ozaki \textit{et al.}~\cite{ozaki2020multiobjective,ozaki2022multiobjective}.

\subsection{Visualization Results}

Figure~\ref{fig:main:experiments:pcps-with-different-weights} visualizes the PCPs colored based on different metric trade-offs, which are indicated by (b), (c), and (d) in Figure~\ref{fig:main:method:proposed-method}.
Recall that a concentration of PCP lines with a similar color on a control parameter axis indicates that the control parameter is the key factor to achieve the single weighted metric values close to those specified by the color~\cite{watanabe2023ped}.

The PCPs in Figure~\ref{subfig:pcp-b} put more emphasis on congestion, $f_1$, by selecting Radar Chart (b) in Figure~\ref{fig:main:method:proposed-method}.
As marked by the red circle, high values in $\lambda_1$ attract purple lines.
It implies that a longer time interval is important at the start of evacuation.
By doing so, sufficient capacity will be preserved in the flow line, reducing congestion.
The PCPs in Figure~\ref{subfig:pcp-c} focus more on the overall evacuation completion time, $f_2$, by selecting Radar Chart (c) in Figure~\ref{fig:main:method:proposed-method}.
As shown by the red circles, purple lines concentrate at low values in $\lambda_3$ and high values in $\lambda_4$.
It indicates that east side guidance in $\lambda_3$ and underground guidance in $\lambda_4$ are essential to reduce the overall evacuation completion time.
Finally, we consider the single weighted metric with a higher weight on the total traveling distance of pedestrians, $f_3$, by selecting Radar Chart (d) in Figure~\ref{fig:main:method:proposed-method}.
As can be seen in the PCPs in Figure~\ref{subfig:pcp-d}, low $\lambda_5$ is the key factor of this scenario.
It means that the underground route should be avoided as much as possible at the location controlled by $\lambda_5$ if the total traveling distance of pedestrians is the primary objective.

\section{Related Work \& Practical Consideration}
\label{main:section:related-work}

The interpretation methods for the relationship between the control parameters and metrics can be roughly divided into two main categories:
(1) Importance quantification of each control parameter, and (2) visualization of high-performance control parameter distribution. 

Importance quantification is typically used to reduce the complexity of the analysis via the control parameter selection.
For example, f-ANOVA~\cite{hutter2014efficient}, PED-ANOVA~\cite{watanabe2023ped}, and SHAP~\cite{lundberg2017unified} quantify the contribution of each control parameter to the variations of the metric of interest.
Since these methods are intended for single-metric problems, they cannot be immediately applied to multi-metric problems.
To the best of our knowledge, few methods address importance quantification for multi-metric problems.
Theodorakopoulos \textit{et al.}~\cite{theodorakopoulos2024hyperparameter} extended f-ANOVA to multi-metric setups.
Although they also focused on the linear weighting of each metric, cf. Section~\ref{main:section:method}, their focus was to compute the importance of each control parameter across different weight pairs, requiring users to select the desirable weights manually.
Meanwhile, we focused on deriving the optimal weights based on the user feedback.
The disadvantage of importance quantification is information loss regarding the locations of promising control parameters.
More specifically, this approach compresses the distribution information of control parameters and metric values into numerical values, making it difficult to perform a more accurate analysis.
This motivates the use of PCPs to interpret the patterns in high-performance control parameters more precisely. 

Another approach involves visualizing the distribution of promising control parameters.
In practice, scatter plot matrices~\cite{grinstein2001high} are widely used in this category.
However, as the number of figures increases quadratically, PCPs, which allow us to view high-dimensional data in a figure, become more advantageous than scatter plot matrices for high-dimensional problems.
Another example is landscape visualization using dimension reduction to a two-dimensional (2D) plane and a predictive model~\cite{lindauer2019towards,sass2022deepcave}.
PCPs, including variants such as density-based parallel coordinates~\cite{heinrich2013state}, are also classified into this category.
PCPs are frequently used in multi-metric problems~\cite{fleming2005many,li2017read}.
However, these studies did not discuss how to color each polyline to enhance interpretability regarding the trade-off between metrics.
The goal of this study is to provide a mathematical formulation to color PCPs based on user preferences and visual support to allow users to intuitively provide their preferences.

Practically speaking, these interpretation methods assist in reducing experiment design or black-box optimization problems to easier ones.
The concrete examples are control parameter selection by checking the importance and making some metrics into constraints.
The effectiveness of the former approach is reported by Watanabe \textit{et al.}~\cite{watanabe2022multi,watanabe2023speeding}, who used PED-ANOVA for this, and Florea and Andonie~\cite{florea2020weighted}, who used f-ANOVA for this.
The latter approach is particularly essential for many-objective problems ($M \geq 4$) because most evaluations achieve Pareto optimality in such a case owing to the curse of dimensionality.
Although this problem is tackled independently in the context of $\epsilon$ constrained-based approach~\cite{song2024dual}, our method is another effective approach to make some metrics into inequality constraints, as we can visually define appropriate thresholds for each metric in PCPs colored by our method.
By relaxing the problems, finding desirable trade-offs becomes viable in an efficient manner.

\section{Conclusion}

This paper proposes a method to color PCPs for multi-metric problems given user feedback.
Our method facilitates the user's selection of metric preferences by displaying the Pareto solutions in the 2D plane reduced by UMAP and metric radar charts at each lattice.
By coloring PCPs based on the user preference, a more intuitive interpretation of promising control parameters for the specified metric trade-off becomes possible.
To verify this, we conducted a visualization and analysis using simulation results of evacuation guidance.
In the analysis, we considered three different trade-offs.
Interestingly, different control parameters were important for each scenario.
Although there have not been any methods to color PCPs effectively in such a case, we could successfully address this problem and demonstrate the effectiveness through our application.
As discussed in Section~\ref{main:section:related-work}, our approach has the potential to contribute to relaxing many-objective problems into fewer-metric problems with some inequality constraints.
Such a study is a possible future research topic.

\newpage

\bibliographystyle{plain}
\bibliography{myrefs}

\end{document}